\newtheorem{theorem}{Theorem}
\newtheorem{lemma}{Lemma}
\newtheorem{proposition}{Proposition}
\theoremstyle{definition}
\newtheorem{definition}{Definition}
\newtheorem{problem}{Problem}
\newtheorem{remark}{Remark}
\DeclareMathOperator{\diag}{diag}
\DeclareMathOperator{\col}{col}
\DeclareMathAlphabet{\vt}{U}{bbold}{m}{n}
\newcommand{\norm}[1]{\ensuremath{\left\| #1 \right\|}}
\DeclareAcronym{ce}{
	short = CE,
	long  = Competitive Equilibrium,
	sort  = C,
}
\DeclareAcronym{sce}{
	short = SCE,
	long  = Socially acceptable Competitive Equilibrium,
	sort  = S,
}
\title{\LARGE \bf
Towards a Socially Acceptable Competitive Equilibrium  \\ in Energy Markets
}
\author{Koorosh Shomalzadeh and Nima Monshizadeh
\thanks{This research was made possible with the support of the Dutch Topsector Energy-subsidy of the Ministry of Economic Affairs and Climate, under the grant number TSYS2221002.}
\thanks{The authors are with {Engineering and Technology Institute Groningen, University~of~Groningen, Nijenborgh~4, 9747 AG Groningen, The Netherlands}. Email: {\tt\small k.shomalzadeh@rug.nl, n.monshizadeh@rug.nl}.}%
}
\begin{document}

\maketitle
\thispagestyle{empty}
\pagestyle{empty}

\begin{abstract}

This paper addresses the problem of energy sharing between a population of price-taking agents who adopt decentralized primal-dual gradient dynamics to find the \ac{ce}. Although the \ac{ce} is efficient, it does not ensure fairness and can potentially lead to high prices. As the agents and market operator share a social responsibility to keep the price below a certain socially acceptable threshold, we propose an approach where the agents modify their utility functions in a decentralized way. We introduce a dynamic feedback controller for the primal-dual dynamics to steer the agents to a \ac{sce}. We demonstrate our theoretical findings in a case study.
\end{abstract}

\acresetall
\section{Introduction} \label{sec:intro}
The market aspect of smart grids has been studied extensively in the recent years. General equilibrium theory has been utilized to understand and devise suitable pricing mechanisms. The notion of equilibrium in an electricity market refers to a condition where a market price is established through competition. Analogous to microeconomics principles, a range of competition models have been introduced for electricity markets, with the \ac{ce} receiving particular attention \cite{nguyen2011walrasian}. The emphasis on the \ac{ce} stems from its widely accepted status as a benchmark for market efficiency \cite{mas1995microeconomic}. When the behavior of a market closely aligns with the predictions of the \ac{ce}, it is considered to be functioning well. 
Through the \ac{ce}, optimal pricing of electricity is achieved, maximizing payoffs for all agents while matching the supply and demand. 

The emergence of information technology and communication infrastructures in the paradigm of smart grids has enabled the development of decentralized and distributed algorithms for finding the \ac{ce} without sharing the sensitive and private information of market participants \cite{li2020transactive}. In particular, primal-dual gradient dynamics, due to their scalability, have been used in multiple works to determine the \ac{ce} \cite{nguyen2011walrasian,stegink2016unifying,papadaskalopoulos2013decentralized,knudsen2015dynamic}, and these dynamics will be our starting point. 

It is a well-established fact that the \ac{ce} can be deemed unfair according to virtually any conceivable definition of fairness \cite{fehr1999theory}. As the \ac{ce} depends on individual needs and resource availability, market dynamics may lead to disparities in outcomes and increased prices, typically favoring those with greater wealth and purchasing ability. Additionally, when the equilibrium price rises excessively, individuals may opt out of active market participation. This phenomenon has been particularly observed in electricity markets recently, leading to multiple instances of price spikes. For instance, on April 4, 2022, the French wholesale market experienced a price spike, reaching up to $4000$ euros per MWh for some hours \cite{gerlagh2022stabilizing}.

To minimize extreme price fluctuations in the \ac{ce}, the authors in \cite{wei2014competitive} propose adding a price penalty term to the system objective, enabling a trade-off between price volatility and economic efficiency. However, yet the market price may become very high, and supply and demand may not match.
To address these points, the notion of socially acceptable price is considered in \cite{chen2022competitive,salehi2023competitive}. A price is considered socially acceptable if it is deemed fair for everyone in the community. The agents participating in the market are requested to choose their utility functions from a predetermined set of socially admissible functions in order to ensure that the equilibrium price stays below a designated threshold, and thus socially acceptable. Nonetheless, this set is not unique, and if chosen without careful consideration, it can significantly impact the market's efficiency.

In this paper, we assume that agents engaged in an electricity market follow the primal-dual gradient dynamics to achieve the \ac{ce}. Drawing inspiration from concerns regarding the inequity of the \ac{ce} and the concept of socially acceptable pricing, we aim for relocating the equilibrium of the market to a socially acceptable one. In particular, by viewing the primal-dual algorithm as a to-be-controlled system, we design a control strategy that can suitably modify the utility functions such that the resulting market price does not exceed a certain threshold, without significantly compromising the market's efficiency. We note that this viewpoint results in a setup that is significantly different than those in \cite{chen2022competitive}, where the agents choose their utility functions from a predetermined set. We analytically establish that the solutions of the closed-loop system converge to a CE of the market with a socially acceptable price. The proposed control algorithm does not rely on knowing or sharing any potentially sensitive parameters of the agents.

The remainder of the paper is organized as follows. In Section~\ref{sec:psf}, we formulate our problem and review some equilibrium concepts from microeconomics, as well as the primal-dual gradient dynamics to reach these equilibria. 
The main results of the paper are provided in Section~\ref{sec:mr}. 
Section~\ref{sec:cs} presents a case study demonstrating the relevance and effectiveness of the proposed control algorithm. Concluding remarks and future research directions are stated in Section~\ref{sec:con}. The proofs for the technical lemmas and propositions can be found in  Appendix.

\noindent \textit{Notation:}
We denote the set of real numbers by $\mathbb{R}$ and the set of nonnegative real numbers by $\mathbb{R}_+$.
We include the dimension of a set as a superscript, whenever needed.
 We use $\vt{1}$ to denote a vector of all ones and $\vt{0}$ to denote a vector/matrix of all zeros. The identity matrix is denoted by $I$.  
 Given a set $\mathcal{N}:=\{1,2,...,N\}$, the column vector obtained by stacking scalars $x_1, x_2, \dots, x_N$ is denoted by $\col(x_n)_{n\in\mathcal{N}}$. Moreover,  $\diag(x_n)_{n\in\mathcal{N}}$ denotes the diagonal matrix with scalars $x_1, x_2, \dots, x_N$ on its diagonal.
 For vectors $x$ and $y$, we write $0 \le x \perp y \ge 0$ if $x,y \ge 0$ and $x^\top y=0$.
For scalars $x$ and $y$, the operator $[x ]_{y}^{+}$
is defined as 
\begin{equation} \label{equ:operator}
    [x ]_{y}^{+}=
    \begin{cases}
     x, \quad &\mathrm{if} \ y>0, \\
     \max\{0,x\}, &\mathrm{if} \ y=0.
    \end{cases}
\end{equation}

\section{Problem Statement and Formulation} \label{sec:psf}
\subsection{Market model}
We consider a population $\mathcal{N}:=\{1,2,\dots, N\}$ of price-taking agents that participate in an electricity market. Each agent $i \in \mathcal{N}$ possesses a renewable energy source which generates $a_i \in \mathbb{R}_+$ electricity with zero marginal cost.
The agents are considered to be in an islanded microgrid, and thus no external energy supplier exists. Moreover, each agent $i\in \mathcal{N}$ is associated with a strictly concave utility function $f_i(x_i)$ that depends on their consumption level $x_i$. Given a uniform price $\lambda$, the agents are willing to sell their excess supply ($a_i-x_i \ge 0$) or buy their demand shortage ($a_i-x_i\le 0$) through a network. We can capture the goal of agent $i \in \mathcal{N}$ by the following optimization problem:
\begin{equation} \label{opt:agent_i}
    \min_{x_i } \quad -f_i(x_i)+\lambda(x_i-a_i) 
\end{equation}
We denote the demand profile vector of all agents by $x=\col(x_i)_{i\in \mathcal{N}}$. 
The equilibria of interest in this market are defined below:
\begin{definition}[\ac{ce}] \label{def:ce} A demand-price pair $(\bar x,\bar \lambda)$ is a \ac{ce} of the market  if 
\begin{enumerate}
    \item For each agent $i \in \mathcal{N}$, $\bar x_i$ is the optimal solution to problem \eqref{opt:agent_i} with $\lambda:=\bar \lambda$.
    \item The supply-demand matching condition holds, i.e.,
    $\sum_{i\in \mathcal{N}} \bar x_i= \sum_{i\in \mathcal{N}} a_i$.
\end{enumerate}
\end{definition}

\begin{definition}[Social welfare equilibrium]
A demand-price pair $(\bar x,\bar \lambda)$ is a social welfare equilibrium of the market if $(\bar x,\bar \lambda)$ satisfies  the KKT conditions of the social welfare optimization problem
\begin{equation} \label{opt:sw}
    \begin{aligned}
        &\min_{x} && -\sum_{i \in \mathcal{N}}f_i(x_i) \\
        & \textrm{s.t.} &&\sum_{i \in \mathcal{N}} x_i= \sum_{i \in \mathcal{N}} a_i,
    \end{aligned}
\end{equation}
i.e.,
\begin{equation}
\begin{aligned} \label{KKT:sw}
       -&\nabla_{x_i} f_i(\bar x_i)+\bar \lambda =0, \quad  \forall i \in \mathcal{N}, \\
      &\sum_{i \in \mathcal{N}} \bar x_i= \sum_{i \in \mathcal{N}} a_i.
\end{aligned}
\end{equation}
\end{definition}
For (strictly) concave utility functions, the aforementioned two notions of equilibrium coincide: 
\begin{lemma}[{\cite[Theorem~1]{li2015demand}}] \label{lem:CE&SW}
    Suppose that $f_i(x_i)$ is strictly concave for all $i \in \mathcal{N}$. Then, the \ac{ce} coincides with the social welfare equilibrium and is unique.
\end{lemma}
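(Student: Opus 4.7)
The plan is to reduce both equilibrium notions to the same system of first-order optimality conditions, and then invoke strict concavity to conclude uniqueness.

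First I would fix a candidate CE pair $(\bar x, \bar \lambda)$ and analyze each agent's problem \eqref{opt:agent_i} with $\lambda = \bar \lambda$. Because $f_i$ is strictly concave, the objective $-f_i(x_i) + \bar\lambda(x_i - a_i)$ is strictly convex and coercive-enough to admit a unique minimizer, which is characterized by the unconstrained first-order condition $-\nabla_{x_i} f_i(\bar x_i) + \bar\lambda = 0$. Coupled with the supply–demand matching condition (2) in Definition~\ref{def:ce}, this system is \emph{exactly} the KKT system \eqref{KKT:sw} of the social welfare problem \eqref{opt:sw}, identifying $\bar\lambda$ with the Lagrange multiplier of the equality constraint. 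The converse direction is identical: any $(\bar x,\bar\lambda)$ satisfying \eqref{KKT:sw} automatically satisfies both conditions of Definition~\ref{def:ce}, since the stationarity condition is precisely the first-order optimality condition of \eqref{opt:agent_i} at $\lambda = \bar\lambda$, which under strict concavity is both necessary and sufficient for a minimizer.

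Next I would establish uniqueness through the social welfare formulation. The objective $-\sum_{i \in \mathcal{N}} f_i(x_i)$ is strictly convex (as a sum of strictly convex functions) and the feasible set $\{x : \sum_i x_i = \sum_i a_i\}$ is a nonempty affine subspace, so \eqref{opt:sw} admits a unique optimal primal solution $\bar x$. Since the constraint is linear, LICQ holds trivially, so a Lagrange multiplier $\bar\lambda$ exists, and its value is uniquely determined by $\bar\lambda = \nabla_{x_i} f_i(\bar x_i)$ for any single $i$ (the stationarity equations for different agents are automatically consistent at the optimum). Combining this with the equivalence established in the previous paragraph yields the unique CE.

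I do not anticipate a substantive obstacle: the only point requiring a moment of care is that \eqref{opt:agent_i} is truly unconstrained in $x_i$, so the stationarity condition has no complementary slackness terms attached, which is what makes its first-order condition line up cleanly with the stationarity part of \eqref{KKT:sw}. Everything else is standard convex-analytic bookkeeping, which is presumably why the statement is cited from \cite{li2015demand} rather than reproven in full.
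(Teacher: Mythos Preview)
Your argument is correct and is the standard route: reduce both notions to the KKT system \eqref{KKT:sw} via first-order optimality of the unconstrained agent problems, then invoke strict convexity of the social welfare objective for uniqueness of the primal and LICQ for uniqueness of the multiplier. There is nothing to compare against, however, because the paper does not supply its own proof of this lemma; it simply imports the result from \cite{li2015demand}. Your write-up is exactly the kind of argument one would expect to find there.

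One small point worth tightening: you assert the agent objective is ``coercive-enough to admit a unique minimizer,'' but strict convexity alone does not guarantee existence of a minimizer on $\mathbb{R}$ (nor does strict convexity of $-\sum_i f_i$ guarantee a minimizer on the affine feasible set). In the paper's actual setting this is moot because the utilities are linear-quadratic with $q_i>0$, so coercivity is automatic; but for the lemma as stated with merely strictly concave $f_i$, existence should either be taken as a hypothesis or secured by an additional growth condition. The equivalence and uniqueness-given-existence parts of your argument are unaffected by this.
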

\subsection{Decentralized \ac{ce} seeking}
Given the result of Lemma~\ref{lem:CE&SW}, one can find the \ac{ce} by solving the social welfare optimization problem~\eqref{opt:sw}. Here, we restrict our attention to linear-quadratic utility functions of the form
\begin{equation} \label{eq:qu}
    f_i(x_i):=-\frac{1}{2}q_ix_i^2-c_ix_i,
\end{equation}
where $q_i>0$ and $c_i \le 0$.
To reach the \ac{ce}, we assume that the agents follow a decentralized primal-dual gradient dynamics \cite{arrow1958studies} arising from the social welfare optimization problem \eqref{opt:sw}.
Our interest in this algorithm stems from its capacity for parallel implementation, which scales effectively with the size of the energy markets \cite{li2020transactive}. Additionally, its widespread adoption highlights its practical applicability and effectiveness in addressing complex market dynamics \cite{nguyen2011walrasian,stegink2016unifying,papadaskalopoulos2013decentralized,knudsen2015dynamic}.
In particular, under the following decentralized primal-dual gradient dynamics, the agents' actions $x$ and the resulting price $\lambda$ converge to the \ac{ce} of the market:
\begin{subequations} \label{dyn:pd1}
    \begin{align}\label{dyn:pd1a}
        \dot x &= -Qx-c-\rho, \\ \label{dyn:pd1b}
        \dot \rho &= x-a-\epsilon,\\ \label{dyn:pd1c}
        \dot \epsilon &= \rho-\vt{1} \lambda,\\ \label{dyn:pd1d}
        \dot \lambda &=\vt{1}^{\top} \epsilon,
    \end{align}
\end{subequations}
where $Q:=\diag(q_i)_{i\in \mathcal{N}}$, $c:=\col(c_i)_{i \in \mathcal{N}}$ and $a:=\col(a_i)_{i \in \mathcal{N}}$. 
For each agent, the subdynamics in \eqref{dyn:pd1b} provides a local estimate of the price and \eqref{dyn:pd1a} is the gradient dynamics for each agent's action under the local price $\rho_i$. We note that \eqref{dyn:pd1a} and \eqref{dyn:pd1b} are performed by each agent, whereas \eqref{dyn:pd1c} and \eqref{dyn:pd1d} are executed by the market operator. In case the vector $a$ is available to the operator, the primal-dual algorithm can be reduced by removing the $\rho$ and $\epsilon$ dynamics, and redefining the price dynamics as $\dot \lambda= \vt{1}^{\top} x -\vt{1}^{\top} a$.

\subsection{A fair \ac{ce}}
Although the \ac{ce} is Pareto efficient, it does not guarantee fairness \cite{fehr1999theory}. Namely, market dynamics may lead to unequal outcomes and high prices, favoring individuals with greater wealth and purchasing power.
Additionally, if the equilibrium price becomes excessively high, individuals might refrain from participating actively in the market, instead of contributing to its self-sustainability. As members exit the system, the achievable payoff for the remaining agents decreases. 
This motivates the following definition:

\begin{definition}[\ac{sce}] \label{def:sace}
Let $\lambda^{\mathrm{max}}$ be a socially acceptable price given a priori.
A \ac{ce} $(\bar x, \bar \lambda )$ is called \textit{socially acceptable} if $\bar \lambda \le \lambda^{\mathrm{max}}$.
\end{definition}

There is a shared responsibility among the agents and the operator to maintain prices at socially acceptable levels to ensure fairness and uphold the market's self-sustainability.
As a \ac{ce} depends on the utility functions of the agents, bringing price below the threshold $\lambda^{\mathrm{max}}$ necessitates adjustments in the utility functions of the agents. 
In particular, to accommodate for fairness, we modify the functions in \eqref{eq:qu} to  \textit{controllable} utility functions given by
\begin{equation} \label{eq:quwu}
    f_i(x_i,u_i):=-\frac{1}{2}q_ix_i^2-c_i(u_i)x_i=-\frac{1}{2}q_ix_i^2-(c_{0i}+u_i)x_i,
\end{equation}
where the utility parameter $c_i\equiv c_i(u_i)$ is split into a nominal value 
$c_{0i}$ and a controllable quantity $u_i$ correspond to the needed modifications to bring the market price below $\lambda^\max$.

For a given $u:= \col(u_i)_{i \in \mathcal{N}} \in \mathbb{R}^N$, the pair $(\bar x, \bar \lambda)$ is a \ac{sce} if and only if it satisfies
\begin{subequations} \label{eq:sace}
    \begin{alignat}{1} \label{eq:sace_a}
        & \lambda \le \lambda^{\mathrm{max}}, \\ \label{eq:sace_b}
        &Q  x+c_0+\vt{1}  \lambda+u=0, \\\label{eq:sace_c}
        &\vt{1}^\top  x =  \vt{1}^\top a,
    \end{alignat}
\end{subequations}
for $(x, \lambda)=(\bar x, \bar\lambda)$.
Here,  \eqref{eq:sace_a} guarantees that the price is lower than $\lambda^{\mathrm{max}}$, while \eqref{eq:sace_b} and \eqref{eq:sace_c} are the KKT conditions as in \eqref{KKT:sw}, written for the utility functions in  \eqref{eq:quwu}. Observe that different choices of $u$ result in different socially acceptable equilibria. 
Among such choices, we look for the  \textit{minimum} adjustment $u$ that leads to a socially acceptable equilibrium. 
This goal is captured by the following optimization problem:
\begin{equation} \label{opt:swsa}
    \begin{aligned} 
        &\min_{ x,  \lambda, u} \quad  &&  \frac{1}{2} \norm{u}^2 \\  
        & \textrm{s.t.} && \eqref{eq:sace}. 
    \end{aligned}
\end{equation}
We denote the solution to the optimization above by  $ (x^*,  \lambda^*, u^*)$, and we refer to it as the \textit{optimal} \ac{sce}\footnote{We will show later in Lemma \ref{lem:swsa_unique} that \eqref{opt:swsa} admits a unique optimal solution.}, bearing in mind that the pair $(x^*,\lambda^*)$ is a socially acceptable equilibrium as it satisfies \eqref{eq:sace}.
It should be noted that in problem \eqref{opt:swsa}, if $u^* = 0$, then the \ac{ce} (see Def.~\ref{def:ce}) is a socially acceptable one (see Def.~\ref{def:sace} ). Otherwise, when $u^* \neq 0$, it indicates that the \ac{ce} has led to a price higher than $\lambda^{\mathrm{max}}$, and the agents' utlity functions have to be modified by $u^*$ to reach a \ac{sce}.

Recall that by following the primal-dual dynamics in \eqref{dyn:pd1} the agents reach 
a \ac{ce}, but not necessarily a socially desired one. By setting $c=c_0+u$, the 
dynamics in \eqref{dyn:pd1} modifies to 
\begin{subequations} \label{dyn:pd2}
    \begin{align}\label{dyn:pd2a}
        \dot x &= -Qx-c_0-\rho-u, \\ \label{dyn:pd2b}
        \dot \rho &= x-a-\epsilon,\\ \label{dyn:pd2c}
        \dot \epsilon &= \rho-\vt{1} \lambda,\\ \label{dyn:pd2d}
        \dot \lambda &=\vt{1}^{\top} \epsilon.
    \end{align}
\end{subequations}
Our aim in the rest of the paper is to design a control law for the primal-dual dynamics in \eqref{dyn:pd2}, with the control input $u$, to achieve the optimal \ac{sce} in \eqref{opt:swsa}. Moreover, we impose a decentralized structure on this control protocol to avoid sharing potentially sensitive information of the agents. This gives rise to the following problem formulation.
\begin{problem} \label{pr:main}
Consider the dynamical system \eqref{dyn:pd2}, with state variables $(x,\rho, \epsilon, \lambda)$ and control input $u$. 
Design a decentralized dynamic controller such that the closed-loop system is asymptotically stable and the solution $(x(t), \lambda(t), u(t))$ converges to the optimal \ac{sce} $(x^*, \lambda^*, u^*)$, i.e., the optimal solution to \eqref{opt:swsa}.
\end{problem}

We note that due to the problem formulation outlined above, the nominal behavior of the agents in \eqref{dyn:pd1} remains in tact since $u$ solely serves as an external input to $\eqref{dyn:pd1}$. This ensures that the solution to Problem \ref{pr:main} is acceptable to the agents. Such an advantage is not present if the operator chooses to directly impose a completely new set of primal-dual dynamics solving \eqref{opt:swsa}. 
The next section is dedicated to addressing Problem \ref{pr:main}.

\section{Main Results} \label{sec:mr}

First, we establish a few properties of the optimization problem in \eqref{opt:swsa}.

\begin{lemma} \label{lem:swsa_unique}
    Consider the optimization problem \eqref{opt:swsa}.
    Then, the following statements hold:
    \begin{enumerate}[leftmargin=*, label=(\alph*)]
        \item The KKT conditions of \eqref{opt:swsa} reduce to 
        \begin{equation} \label{kkt:swsa_reduced}
            0 \le \vt{1}^{\top}(\phi(\lambda) -a) \perp  \lambda^{\mathrm{max}} - \lambda \ge 0,
        \end{equation}
         with
        \begin{equation} \label{eq:g}
            \phi(\lambda):=- Q^{-1} (\vt{1} \lambda +c_0).
        \end{equation}
        \item The linear complementarity relation \eqref{kkt:swsa_reduced} has a unique solution $\lambda=\lambda^*$.
        \item The optimization problem \eqref{opt:swsa} admits a unique primal-dual optimizer.
    \end{enumerate}
\end{lemma}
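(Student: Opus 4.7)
My plan is to handle (a) first by extracting an LCP in the scalar $\lambda$ from the KKT system, which then makes (b) a one-dimensional monotonicity argument and (c) a consequence of strict convexity. I attach Lagrange multipliers $\mu \in \mathbb{R}$ to \eqref{eq:sace_a}, $\nu \in \mathbb{R}^N$ to \eqref{eq:sace_b}, and $\eta \in \mathbb{R}$ to \eqref{eq:sace_c}, noting that Slater's condition clearly holds so KKT is necessary and sufficient.

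For part (a), stationarity with respect to $x$, $\lambda$, and $u$ gives
\begin{align*}
Q\nu + \eta \vt{1} &= 0,\\
\mu + \vt{1}^\top \nu &= 0,\\
u + \nu &= 0.
\end{align*}
From these I read off $\nu = -u$, $u = \eta\, Q^{-1}\vt{1}$, and $\mu = \eta\,\vt{1}^\top Q^{-1}\vt{1}$. Substituting $u$ back into \eqref{eq:sace_b} yields $x = \phi(\lambda) - Q^{-1}u$, and then \eqref{eq:sace_c} forces $\vt{1}^\top(\phi(\lambda) - a) = \vt{1}^\top Q^{-1} u = \eta\,\vt{1}^\top Q^{-2}\vt{1}$. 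Since $q_i>0$ for all $i$, both $\vt{1}^\top Q^{-1}\vt{1}$ and $\vt{1}^\top Q^{-2}\vt{1}$ are strictly positive, so $\mu$ and $\vt{1}^\top(\phi(\lambda) - a)$ share a sign and vanish together. The primal feasibility $\lambda \le \lambda^{\max}$, the dual feasibility $\mu \ge 0$, and the complementarity $\mu(\lambda - \lambda^{\max}) = 0$ therefore collapse exactly to \eqref{kkt:swsa_reduced}.

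For part (b), I observe that $h(\lambda) := \vt{1}^\top(\phi(\lambda) - a) = -(\vt{1}^\top Q^{-1}\vt{1})\lambda - \vt{1}^\top Q^{-1} c_0 - \vt{1}^\top a$ is a strictly decreasing affine function, so it has a unique root $\bar\lambda$. I claim $\lambda^* := \min(\lambda^{\max}, \bar\lambda)$ is the unique solution of \eqref{kkt:swsa_reduced}: if $\bar\lambda \le \lambda^{\max}$ then $\lambda^* = \bar\lambda$ gives $h(\lambda^*) = 0$ with $\lambda^{\max} - \lambda^* \ge 0$; if $\bar\lambda > \lambda^{\max}$ then $\lambda^* = \lambda^{\max}$ gives $\lambda^{\max} - \lambda^* = 0$ with $h(\lambda^*) > 0$. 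Strict monotonicity of $h$ rules out every other candidate. For part (c), strict convexity of $\tfrac{1}{2}\|u\|^2$ over the convex feasible set forces $u^*$ to be unique; then \eqref{eq:sace_b} together with \eqref{eq:sace_c} uniquely determines $\lambda^*$ via the strictly monotone map $\lambda \mapsto \vt{1}^\top \phi(\lambda)$, and hence $x^*$, while the stationarity relations of (a) pin down $\nu^*, \eta^*, \mu^*$.

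The step I expect to be most delicate is the elimination in (a): I need to verify that the scalar linking $\mu$ to $\vt{1}^\top(\phi(\lambda) - a)$ is strictly positive so that the nonnegativity and complementarity conditions lift verbatim from $(\mu, \lambda^{\max}-\lambda)$ to $(\vt{1}^\top(\phi(\lambda)-a), \lambda^{\max}-\lambda)$. Once this reduction is clean, (b) and (c) reduce to strict monotonicity of an affine function and strict convexity of a quadratic, respectively.
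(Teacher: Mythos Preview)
Your proof is correct and follows the same elimination route as the paper: write out the KKT system, solve the stationarity conditions for the multipliers in terms of a single scalar (your $\eta$, the paper's $\nu^*$), substitute into the primal constraints to obtain \eqref{kkt:swsa_reduced}, and then recover all remaining primal and dual variables uniquely. The only notable differences are in parts (b) and (c): for (b) you argue uniqueness by an explicit case analysis on the strictly decreasing affine map $h(\lambda)$, whereas the paper substitutes $z=\lambda^{\max}-\lambda$ to cast \eqref{kkt:swsa_reduced} as a standard LCP and invokes a textbook existence/uniqueness result; for (c) you deduce uniqueness of $u^*$ first from strict convexity in $u$ and then back out $(\lambda^*,x^*)$ and the multipliers, while the paper starts from the uniqueness of $\lambda^*$ established in (b) and reads off the rest from the elimination formulas. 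Your arguments for (b) and (c) are more self-contained, but the underlying logic is the same.
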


Next, to tackle Problem~\ref{pr:main}, we present an optimization problem derived from the dual of \eqref{opt:sw} by incorporating some modifications. Subsequently, we demonstrate that the solution to this modified dual problem has a one-to-one relation with the solution to \eqref{opt:swsa}. The modified dual problem will then be used to devise the decentralized control law solving Problem \ref{pr:main}.

\subsection{Modified dual problem} \label{sec:mr_1}
Consider the following optimization problem: 
\begin{equation} \label{opt:swv}
    \begin{aligned}
        &\min_{y} && \frac{1}{2}y^{\top}Qy+c_0^{\top}y \\ 
        & \mathrm{s.t.} && \vt{1}^{\top}y =\vt{1}^{\top} a.
    \end{aligned}
\end{equation}
Observe that the above problem is the vector form of \eqref{opt:sw} with $y\equiv x$, $f_i$ as in \eqref{eq:quwu} and $u_i=0$. 
We call the pair $(\bar y, \bar \lambda)$ a \textit{primal-dual optimizer} of \eqref{opt:swv} if it satisfies the KKT conditions of  \eqref{opt:swv}. Note that this optimizer is unique since \eqref{opt:swv} is strictly convex. 
Next, we write the so-called \textit{dual problem} of \eqref{opt:swv} given by
\begin{equation} \label{opt:swv_dual}
    \min_\lambda \quad \frac{1}{2} (\vt{1}^{\top}Q^{-1}\vt{1}) \lambda^2 +(Q^{-1}c_0+a)^{\top}\vt{1} \lambda
\end{equation}
\begin{lemma} \label{lem:dual1}
    Consider the optimization problems \eqref{opt:swv} and \eqref{opt:swv_dual}. If $\bar \lambda$ is the optimal solution to \eqref{opt:swv_dual}, then  $(\bar y, \bar\lambda)$ is the primal-dual optimizer to \eqref{opt:swv} with $\bar y= -Q^{-1} (\vt{1} \bar \lambda+c_0)$.
    Conversely, if $(\bar y, \bar\lambda)$ is the primal-dual optimizer to \eqref{opt:swv}, then $\bar \lambda$ is the optimal solution to \eqref{opt:swv_dual}.
\end{lemma}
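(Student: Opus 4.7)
The plan is to use standard Lagrangian duality for the strictly convex equality-constrained quadratic program \eqref{opt:swv}. Because $Q$ is positive definite and the constraint is linear, strong duality holds and the Lagrangian admits a unique minimizer in $y$ for each $\lambda$, which is exactly what relates the two optimizers.

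First I would form the Lagrangian
\begin{equation*}
L(y,\lambda)=\tfrac{1}{2}y^{\top}Qy+c_0^{\top}y+\lambda\bigl(\vt{1}^{\top}y-\vt{1}^{\top}a\bigr),
\end{equation*}
and write the KKT conditions of \eqref{opt:swv}, namely $\nabla_y L=Qy+c_0+\vt{1}\lambda=0$ together with the primal feasibility $\vt{1}^{\top}y=\vt{1}^{\top}a$. The first condition yields $y=-Q^{-1}(\vt{1}\lambda+c_0)$. I would also record the unconstrained optimality condition for the scalar quadratic \eqref{opt:swv_dual}, which is simply
\begin{equation*}
(\vt{1}^{\top}Q^{-1}\vt{1})\lambda+\vt{1}^{\top}Q^{-1}c_0+\vt{1}^{\top}a=0.
\end{equation*}
Since $\vt{1}^{\top}Q^{-1}\vt{1}>0$ (as $Q\succ 0$), this linear equation has a unique solution, matching the strict convexity of the dual objective.

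For the forward direction, I would assume $\bar\lambda$ solves \eqref{opt:swv_dual}, set $\bar y=-Q^{-1}(\vt{1}\bar\lambda+c_0)$, and verify both primal KKT conditions. The stationarity condition $Q\bar y+c_0+\vt{1}\bar\lambda=0$ holds by construction. For feasibility, I would compute $\vt{1}^{\top}\bar y=-\vt{1}^{\top}Q^{-1}(\vt{1}\bar\lambda+c_0)$ and substitute the dual optimality condition above to conclude $\vt{1}^{\top}\bar y=\vt{1}^{\top}a$. Thus $(\bar y,\bar\lambda)$ satisfies the KKT conditions of \eqref{opt:swv}, hence is its (unique) primal-dual optimizer. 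For the converse, starting from a primal-dual optimizer $(\bar y,\bar\lambda)$ of \eqref{opt:swv}, the stationarity condition already forces $\bar y=-Q^{-1}(\vt{1}\bar\lambda+c_0)$; inserting this into $\vt{1}^{\top}\bar y=\vt{1}^{\top}a$ yields exactly the dual optimality condition, so $\bar\lambda$ solves \eqref{opt:swv_dual}.

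I do not anticipate any real obstacle: the argument is a routine Lagrangian elimination of the primal variable $y$ from a strictly convex quadratic program. The only care required is bookkeeping of signs and the use of $Q\succ 0$ to guarantee both invertibility of $Q$ and uniqueness of the dual minimizer. If desired, one could equivalently justify everything by explicitly computing $g(\lambda)=\min_y L(y,\lambda)$ and observing (up to an additive constant) that $-g(\lambda)$ is the objective of \eqref{opt:swv_dual}, but the KKT-matching argument above is more direct and ties in cleanly with the uniqueness statement in Lemma~\ref{lem:swsa_unique}.
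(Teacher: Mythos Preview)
Your proposal is correct and follows essentially the same route as the paper: write the first-order optimality condition of \eqref{opt:swv_dual} and the KKT conditions of \eqref{opt:swv}, solve the stationarity equation for $\bar y=-Q^{-1}(\vt{1}\bar\lambda+c_0)$, and observe that substituting this into the primal feasibility $\vt{1}^{\top}\bar y=\vt{1}^{\top}a$ yields exactly the dual optimality condition. The paper's argument is slightly terser (it does not explicitly invoke the Lagrangian or strong duality), but the substance is identical.
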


As the dual variable $\lambda$ is a decision variable in \eqref{opt:swv_dual}, to restrict the price, we can add the constraint $\lambda \le \lambda^{\mathrm{max}}$ to this problem, which yields
\begin{equation} \label{opt:swv_dual_mod}
    \min_{\lambda\le \lambda^{\max}} \quad \frac{1}{2} (\vt{1}^{\top}Q^{-1}\vt{1}) \lambda^2 +(Q^{-1}c_0+a)^{\top}\vt{1} \lambda
\end{equation}
We now attempt to dualize once more the modified problem \eqref{opt:swv_dual_mod}. This leads to the following  problem:
\begin{equation} \label{opt:swv_primal_mod}
    \begin{aligned}
        &\min_{y,s} && \frac{1}{2}y^{\top}Qy+c_0^{\top}y +\lambda^{\max} s\\ 
        & \mathrm{s.t.} && \vt{1}^{\top}y-s =\vt{1}^{\top} a,\\
        &&& s \ge 0,
    \end{aligned}
\end{equation}
with the addition of a new scalar slack variable $s$. We call $\big((\bar y, \bar s), (\bar \lambda, \bar \mu_s)\big)$ a \textit{primal-dual optimizer} of \eqref{opt:swv_primal_mod} if it satisfies the KKT conditions of  \eqref{opt:swv_primal_mod}, where $\bar \lambda$ and $\bar \mu_s$ denote the optimal values of the dual variables associated with the equality and nonnegativity constraint, respectively.
The duality  between \eqref{opt:swv_dual_mod} and \eqref{opt:swv_primal_mod} is established below:

\begin{lemma} \label{lem:dual2}
If $\bar \lambda$ is the optimal solution to \eqref{opt:swv_dual_mod}, then $\big((\bar y, \bar s), (\bar \lambda, \bar \mu_s)\big)$ is the primal-dual optimizer to \eqref{opt:swv_primal_mod} with $\bar y= \phi(\bar \lambda)$, $\bar s= \vt{1}^{\top}(\phi(\bar \lambda)-a)$, $\bar \mu_s=\lambda^{\max}-\bar \lambda$ and $\phi(\lambda)$ as in \eqref{eq:g}. Conversely, if $\big((\bar y, \bar s), (\bar \lambda, \bar \mu_s)\big)$ is the primal-dual optimizer to \eqref{opt:swv_primal_mod}, then 
$\bar \lambda$ is the optimal solution to \eqref{opt:swv_dual_mod}.
Moreover, the KKT conditions of both \eqref{opt:swv_dual_mod} and \eqref{opt:swv_primal_mod} reduce to \eqref{kkt:swsa_reduced}  with $\lambda=\bar \lambda$.
\end{lemma}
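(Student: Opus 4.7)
The strategy is to show that, after eliminating primal variables, the KKT systems of both \eqref{opt:swv_dual_mod} and \eqref{opt:swv_primal_mod} collapse onto the single-variable linear complementarity relation \eqref{kkt:swsa_reduced}. Since both optimization problems are convex quadratic programs with affine constraints (the dual is strictly convex because $Q^{-1}$ is positive definite, and the primal satisfies Slater's condition trivially), KKT is both necessary and sufficient for optimality, so this equivalence will deliver the claimed bijection between optima.

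For \eqref{opt:swv_dual_mod}, I would attach a nonnegative multiplier $\mu$ to the constraint $\lambda \le \lambda^{\max}$. Stationarity gives $(\vt{1}^{\top} Q^{-1}\vt{1})\lambda + \vt{1}^{\top}(Q^{-1}c_0 + a) + \mu = 0$, and rearranging using the definition of $\phi$ in \eqref{eq:g} yields $\mu = \vt{1}^{\top}(\phi(\lambda) - a)$. Combining this with primal feasibility $\lambda \le \lambda^{\max}$, dual feasibility $\mu \ge 0$, and complementary slackness $\mu(\lambda^{\max}-\lambda)=0$ recovers exactly \eqref{kkt:swsa_reduced}.

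For \eqref{opt:swv_primal_mod}, I would write the Lagrangian with multiplier $\lambda$ for the equality $\vt{1}^{\top} y - s = \vt{1}^{\top} a$ and $\mu_s \ge 0$ for $s \ge 0$, choosing the sign convention that reproduces the formulas stated in the lemma. Stationarity in $y$ gives $Qy + c_0 + \vt{1}\lambda = 0$, hence $y = \phi(\lambda)$. Stationarity in $s$ gives $\mu_s = \lambda^{\max} - \lambda$, and the equality constraint then forces $s = \vt{1}^{\top}(\phi(\lambda) - a)$. The residual KKT conditions $s \ge 0$, $\mu_s \ge 0$, $\mu_s s = 0$ become $0 \le \vt{1}^{\top}(\phi(\lambda)-a) \perp \lambda^{\max} - \lambda \ge 0$, which is again \eqref{kkt:swsa_reduced}. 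By Lemma \ref{lem:swsa_unique}, this LCP admits a unique solution $\bar\lambda$, so the explicit formulas reconstruct the unique primal-dual optimizer of \eqref{opt:swv_primal_mod} from $\bar\lambda$, and conversely the $\lambda$-component of any primal-dual optimizer of \eqref{opt:swv_primal_mod} solves \eqref{opt:swv_dual_mod}.

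The only real obstacle is careful bookkeeping of the sign conventions when dualizing, so that the multiplier of the equality constraint in \eqref{opt:swv_primal_mod} matches the variable $\lambda$ of \eqref{opt:swv_dual_mod} (and not its negative), and so that the slack $s$ corresponds cleanly to the stationarity residual $\vt{1}^{\top}(\phi(\lambda)-a)$. Once this is set up, the remainder is a routine verification that both convex QPs share the same reduced KKT system, which, combined with the sufficiency of KKT for convex problems, yields all three assertions of the lemma simultaneously.
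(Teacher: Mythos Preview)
Your proposal is correct and follows essentially the same approach as the paper: you write the KKT conditions of both \eqref{opt:swv_dual_mod} and \eqref{opt:swv_primal_mod}, eliminate the primal variables using the stationarity conditions to obtain the explicit formulas $\bar y=\phi(\bar\lambda)$, $\bar s=\vt{1}^\top(\phi(\bar\lambda)-a)$, $\bar\mu_s=\lambda^{\max}-\bar\lambda$, and then observe that the residual complementarity conditions of both problems coincide with \eqref{kkt:swsa_reduced}, invoking Lemma~\ref{lem:swsa_unique}(b) for uniqueness. The paper's proof is essentially identical, with the only cosmetic difference being that it does not spell out the convexity/Slater justification for KKT sufficiency.
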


We conclude this subsection by establishing the relationship between the solutions to \eqref{opt:swsa} and \eqref{opt:swv_primal_mod}.

\begin{proposition} \label{pro:map}
 Let $\nu^*$ be the optimal dual variable corresponding to the constraint \eqref{eq:sace_c} and $\big((\bar y, \bar s), (\bar \lambda, \bar \mu_s)\big)$ be the primal-dual optimizer to \eqref{opt:swv_primal_mod}. Then, we have $\bar \lambda=\lambda^*$,
 \begin{equation} \label{eq:cov}
    \begin{bmatrix}
        \bar y \\ \bar s 
    \end{bmatrix}=
    M \begin{bmatrix}
         x^* \\ \nu^* 
    \end{bmatrix},
\end{equation}
with 
\begin{equation} \label{eq:M}
    M:=\begin{bmatrix}
        I & Q^{-2}\vt{1} \\ \vt{0} & \vt{1}^{\top} Q^{-2} \vt{1}
    \end{bmatrix}.
\end{equation}
Moreover, $M$ is nonsingular. 
\end{proposition}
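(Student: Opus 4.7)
The plan is to derive the KKT conditions of \eqref{opt:swsa} in explicit form, then combine Lemma~\ref{lem:swsa_unique} with Lemma~\ref{lem:dual2} to tie both optimization problems to the common reduced complementarity condition \eqref{kkt:swsa_reduced}, and finally read off \eqref{eq:cov} by direct substitution.

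First, I would write the Lagrangian of \eqref{opt:swsa} using multipliers $\mu \ge 0$ for \eqref{eq:sace_a}, $\xi \in \mathbb{R}^N$ for \eqref{eq:sace_b}, and $\nu \in \mathbb{R}$ for \eqref{eq:sace_c}. Stationarity in $u$ gives $u^* = -\xi^*$; stationarity in $x$ gives $Q\xi^* + \nu^*\vt{1} = 0$, so $\xi^* = -\nu^* Q^{-1}\vt{1}$ and hence $u^* = \nu^* Q^{-1}\vt{1}$; stationarity in $\lambda$ yields $\mu^* = \nu^* \vt{1}^{\top} Q^{-1}\vt{1}$. Since both \eqref{opt:swsa} (by Lemma~\ref{lem:swsa_unique}(a)) and \eqref{opt:swv_primal_mod} (by Lemma~\ref{lem:dual2}) reduce to the same complementarity condition \eqref{kkt:swsa_reduced} on their scalar dual $\lambda$, and since that condition has a unique solution by Lemma~\ref{lem:swsa_unique}(b), it follows at once that $\bar\lambda = \lambda^*$.

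For the block-row identities in \eqref{eq:cov}, I would substitute $u^* = \nu^* Q^{-1}\vt{1}$ into \eqref{eq:sace_b} and solve for $x^*$ to obtain $\phi(\lambda^*) = x^* + \nu^* Q^{-2}\vt{1}$. Lemma~\ref{lem:dual2} then gives $\bar y = \phi(\bar\lambda) = \phi(\lambda^*) = x^* + \nu^* Q^{-2}\vt{1}$, which is the first block row. Multiplying this same relation by $\vt{1}^{\top}$ and using the feasibility $\vt{1}^{\top} x^* = \vt{1}^{\top} a$ yields $\vt{1}^{\top}(\phi(\lambda^*) - a) = \nu^* \vt{1}^{\top} Q^{-2}\vt{1}$; by Lemma~\ref{lem:dual2} the left-hand side equals $\bar s$, giving the second block row.

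For nonsingularity, I would observe that $M$ is block upper triangular with diagonal blocks $I$ and the scalar $\vt{1}^{\top} Q^{-2}\vt{1} = \sum_{i \in \mathcal{N}} q_i^{-2} > 0$, so $\det M > 0$. The only real obstacle is a careful treatment of the Lagrangian sign convention, so that the sign of $\nu^*$ emerges compatible with the $+Q^{-2}\vt{1}$ appearing in $M$; once the signs are anchored against the complementarity relation \eqref{kkt:swsa_reduced}, the remainder is routine linear algebra.
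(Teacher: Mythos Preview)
Your proposal is correct and follows essentially the same route as the paper: both establish $\bar\lambda=\lambda^*$ by reducing the KKT systems of \eqref{opt:swsa} and \eqref{opt:swv_primal_mod} to the common complementarity relation \eqref{kkt:swsa_reduced} and invoking its uniqueness, then verify \eqref{eq:cov} from $\bar y=\phi(\bar\lambda)$, $\bar s=\vt{1}^\top(\phi(\bar\lambda)-a)$ together with $x^*=\phi(\lambda^*)-\nu^*Q^{-2}\vt{1}$, and argue nonsingularity via $\det M=\vt{1}^\top Q^{-2}\vt{1}>0$. The only cosmetic difference is that you re-derive the stationarity relations $u^*=\nu^*Q^{-1}\vt{1}$ and $x^*=\phi(\lambda^*)-\nu^*Q^{-2}\vt{1}$ from scratch, whereas the paper simply quotes the equivalent identities already obtained inside the proof of Lemma~\ref{lem:swsa_unique}.
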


In the next subsection, we utilize the aforementioned results to obtain the controller specified in Problem~\ref{pr:main}.

\subsection{Controller design}

We propose the following dynamics as the controller for the system \eqref{dyn:pd2}:
\begin{subequations} \label{dyn:cont}
    \begin{alignat}{1} \label{dyn:cont_a}
        \dot u& =  -Q^{-1} u -Q \pi-x - Q^{-1}(c_0+\lambda^{\mathrm{max}}\vt{1}),\\ \label{dyn:cont_b}
         \dot \pi & = Q u - \vt{1} \nu, \\ \label{dyn:cont_c}
         \dot \nu & = \vt{1}^{\top} \pi + \mu,\\  \label{dyn:cont_d}
        \dot  \mu & = [-\nu ]_{\mu}^{+}.
    \end{alignat}
\end{subequations}

We note that \eqref{dyn:cont} is a decentralized dynamic feedback controller with the input $x$ and output $u$. While  \eqref{dyn:cont_a} and \eqref{dyn:cont_b} are performed by each agent, equations \eqref{dyn:cont_c} and \eqref{dyn:cont_d} are executed by the market operator. The operator does not require any information on the private utility parameters $q_i$s and $c_{0i}$s. We postpone the discussion on how the controller dynamics can be derived based on the treatment in Subsection~\ref{sec:mr_1} towards the end of this subsection.

The next result shows that the closed-loop system, \eqref{dyn:pd2} and \eqref{dyn:cont}, possesses a unique equilibrium, and this equilibrium has the  desired properties specified in Problem \ref{pr:main}.

\begin{proposition} \label{pro:unique_equ}
The closed-loop system, comprising  \eqref{dyn:pd2} and \eqref{dyn:cont}, admits a unique equilibrium 
$(\bar x, \bar \rho, \bar \epsilon, \bar \lambda, \bar u, \bar \pi, \bar \nu, \bar \mu)$. Moreover, $\bar x=x^*$, $\bar\lambda=\lambda^*$, and $\bar u=u^*$, where $(x^*, \lambda^*, u^*)$ is the optimal \ac{sce} given by \eqref{opt:swsa}.
\end{proposition}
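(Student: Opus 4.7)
The plan is to compute the equilibrium of the closed-loop system explicitly, collapse the resulting conditions to the scalar linear complementarity problem \eqref{kkt:swsa_reduced}, and then invoke Lemma~\ref{lem:swsa_unique} to conclude both uniqueness of the equilibrium and its identification with $(x^*, \lambda^*, u^*)$.

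First I would set the right-hand sides of \eqref{dyn:pd2} to zero; this immediately yields $\bar\rho=\vt{1}\bar\lambda$, $\bar\epsilon=\bar x-a$, together with the primal-feasibility relations $Q\bar x+c_0+\vt{1}\bar\lambda+\bar u=0$ and $\vt{1}^\top\bar x=\vt{1}^\top a$, which are exactly \eqref{eq:sace_b}--\eqref{eq:sace_c}. Passing to the controller \eqref{dyn:cont}, the equilibrium of \eqref{dyn:cont_b} gives $\bar u=Q^{-1}\vt{1}\bar\nu$, while reading the operator \eqref{equ:operator} at $\dot\mu=0$ forces the scalar complementarity $0\le\bar\nu\perp\bar\mu\ge 0$.

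The key algebraic step is to eliminate $\bar\pi$. Multiplying the equilibrium of \eqref{dyn:cont_a} by $-Q$ and subtracting $Q\bar x+c_0+\vt{1}\bar\lambda+\bar u=0$ eliminates both $\bar x$ and $\bar u$, yielding $Q^2\bar\pi+\vt{1}(\lambda^{\max}-\bar\lambda)=0$. Substituting into the equilibrium of \eqref{dyn:cont_c} produces the clean scalar identity $\bar\mu=(\vt{1}^\top Q^{-2}\vt{1})(\lambda^{\max}-\bar\lambda)$, which combined with $\bar\mu\ge 0$ delivers $\bar\lambda\le\lambda^{\max}$. To recognise \eqref{kkt:swsa_reduced}, I would substitute $\bar u=Q^{-1}\vt{1}\bar\nu$ back into the primal equilibrium to obtain $\bar x=\phi(\bar\lambda)-Q^{-2}\vt{1}\bar\nu$, and left-multiply by $\vt{1}^\top$ to get $\vt{1}^\top(\phi(\bar\lambda)-a)=(\vt{1}^\top Q^{-2}\vt{1})\bar\nu$. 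Since $\vt{1}^\top Q^{-2}\vt{1}>0$, the complementarity $\bar\mu\bar\nu=0$ translates into $(\lambda^{\max}-\bar\lambda)\,\vt{1}^\top(\phi(\bar\lambda)-a)=0$, and the non-negativity of both factors is already in hand. This is precisely \eqref{kkt:swsa_reduced}, so Lemma~\ref{lem:swsa_unique}(b) pins down $\bar\lambda=\lambda^*$. Back-substitution then determines $\bar\nu$, $\bar u=Q^{-1}\vt{1}\bar\nu$, $\bar x=\phi(\lambda^*)-Q^{-2}\vt{1}\bar\nu$, and the remaining variables $\bar\rho,\bar\epsilon,\bar\pi,\bar\mu$ uniquely. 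A direct comparison with the KKT system of \eqref{opt:swsa} (whose reduction was carried out in Lemma~\ref{lem:swsa_unique}(a), and which likewise enforces $u^*=\nu^* Q^{-1}\vt{1}$ and $x^*=\phi(\lambda^*)-Q^{-2}\vt{1}\nu^*$ for the multiplier $\nu^*$ of \eqref{eq:sace_c}) then identifies $\bar x=x^*$ and $\bar u=u^*$.

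The main obstacle I anticipate is not conceptual but bookkeeping: tracking the correct signs through the elimination of $\bar\pi$, and reading the operator $[\,\cdot\,]_y^{+}$ at $\dot\mu=0$ in the right orientation so that the resulting relation $0\le\bar\nu\perp\bar\mu\ge 0$ aligns with the complementarity in \eqref{kkt:swsa_reduced}. Once the scalar identity for $\bar\mu$ is in place, everything funnels cleanly into the reduced LCP and the remainder is pure back-substitution.
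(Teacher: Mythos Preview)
Your proposal is correct and follows essentially the same route as the paper's proof: both set the closed-loop right-hand sides to zero, use $\bar u=Q^{-1}\vt{1}\bar\nu$ and the elimination of $\bar\pi$ via $Q^2\bar\pi=\vt{1}(\bar\lambda-\lambda^{\max})$ to obtain $\bar\mu=(\vt{1}^\top Q^{-2}\vt{1})(\lambda^{\max}-\bar\lambda)$, reduce the equilibrium conditions to the complementarity relation underlying Lemma~\ref{lem:swsa_unique}, and then back-substitute. The only cosmetic difference is that the paper stops the reduction at the system \eqref{kkt:mpec_quad1} and defers the final collapse to the scalar LCP \eqref{kkt:swsa_reduced} by pointing to the proof of Lemma~\ref{lem:swsa_unique}, whereas you carry that last reduction out explicitly within the proposition.
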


The next theorem establishes that the 
controller \eqref{dyn:cont} solves Problem~\ref{pr:main}.
\begin{theorem} \label{the:main}
The equilibrium of the closed-loop system, comprising  \eqref{dyn:pd2} and \eqref{dyn:cont}, is globally\footnote{Note that the domain of the variable $\mu$ is restricted to $\mathbb{R}_+$; see \eqref{equ:operator}.} asymptotically stable. In particular, $\lim_{t \to \infty} (x(t), \lambda(t), u(t))=( x^*,  \lambda^*, u^*)$, i.e.,  the optimal \ac{sce}.
\end{theorem}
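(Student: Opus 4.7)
The approach is a Lyapunov argument combined with LaSalle's invariance principle. By Proposition~\ref{pro:unique_equ}, the closed-loop system admits a unique equilibrium $\bar\xi := (\bar x,\bar\rho,\bar\epsilon,\bar\lambda,\bar u,\bar\pi,\bar\nu,\bar\mu)$ with $\bar x = x^*$, $\bar\lambda = \lambda^*$, $\bar u = u^*$; moreover the equilibrium condition $[-\bar\nu]_{\bar\mu}^+ = 0$ together with $\bar\mu \ge 0$ enforces the complementarity $\bar\mu\bar\nu = 0$ with $\bar\nu \ge 0$. Working in deviation coordinates $\tilde x := x - \bar x$, $\tilde\rho := \rho - \bar\rho$, and so on, I would adopt the natural quadratic Lyapunov candidate
\begin{equation*}
V = \tfrac{1}{2}\bigl(\|\tilde x\|^{2} + \|\tilde\rho\|^{2} + \|\tilde\epsilon\|^{2} + \tilde\lambda^{2} + \|\tilde u\|^{2} + \|\tilde\pi\|^{2} + \tilde\nu^{2} + \tilde\mu^{2}\bigr),
\end{equation*}
which is radially unbounded on the positively invariant domain $\{\mu \ge 0\}$.

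Next, I would differentiate $V$ along the closed-loop dynamics. The primal-dual blocks \eqref{dyn:pd2} and \eqref{dyn:cont} each exhibit skew-symmetric coupling, so the cross-terms $\tilde x^\top \tilde\rho$, $\tilde\rho^\top \tilde\epsilon$, $\tilde\epsilon^\top \vt{1}\tilde\lambda$, $\tilde u^\top Q\tilde\pi$, and $\tilde\pi^\top \vt{1}\tilde\nu$ cancel pairwise. What survives is the pair of diagonal damping terms $-\tilde x^\top Q\tilde x - \tilde u^\top Q^{-1}\tilde u$, the interconnection term $-2\tilde x^\top \tilde u$ coming from the ``$-u$ into $\dot x$ and $-x$ into $\dot u$'' coupling, and the projection contribution $\tilde\mu\tilde\nu + \tilde\mu[-\nu]_\mu^+$. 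The first three combine into a perfect square, while a short case split on the signs of $\mu$ and $\bar\mu$ (exploiting $\bar\mu\bar\nu = 0$ and the definition of $[\,\cdot\,]_{\cdot}^{+}$) shows the projection contribution is nonpositive. The upshot is
\begin{equation*}
\dot V \le -\bigl\|Q^{1/2}\tilde x + Q^{-1/2}\tilde u\bigr\|^{2} \le 0,
\end{equation*}
so $V$ is non-increasing and trajectories remain bounded for every initial condition in the admissible domain.

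Finally, I would invoke LaSalle's invariance principle on the largest invariant subset $\mathcal{I}$ of $\{\dot V = 0\}$. On $\mathcal{I}$ the identity $Q\tilde x + \tilde u \equiv 0$ holds; differentiating it once along the dynamics and substituting the deviation equations yields $\tilde\rho + \tilde\pi \equiv 0$, and continued differentiations, together with the equality conditions extracted from the projection case analysis, progressively force $\tilde\epsilon$, $\tilde\lambda$, $\tilde\nu$, $\tilde\mu$, and therefore also $\tilde x$, $\tilde u$, $\tilde\rho$, $\tilde\pi$ to vanish, so that $\mathcal{I} = \{\bar\xi\}$. Global asymptotic stability of $\bar\xi$ then delivers the convergence claim via the identifications $\bar x = x^*$, $\bar\lambda = \lambda^*$, $\bar u = u^*$ recorded in Proposition~\ref{pro:unique_equ}. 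The main obstacle, I expect, is precisely this LaSalle step: propagating the constraint $Q\tilde x + \tilde u \equiv 0$ through the eight-block coupled system while simultaneously handling the piecewise behaviour of $[-\nu]_\mu^+$ on the boundary $\mu = 0$ calls for a careful cascade-style argument, noticeably more delicate than the Lyapunov computation itself.
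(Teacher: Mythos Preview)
Your proposal is correct and is essentially the same argument as the paper's: the same quadratic Lyapunov function, the same dissipation inequality $\dot V \le -\|Q^{1/2}\tilde x + Q^{-1/2}\tilde u\|^{2}$ obtained via skew-symmetric cancellations plus a case analysis of the projection term, followed by LaSalle's invariance principle. Your LaSalle cascade outline is in fact more explicit than the paper, which simply invokes a Carath\'eodory variant of LaSalle (citing Cherukuri et al.) and asserts without further detail that the only invariant subset of $\{Q\tilde x + \tilde u = 0\}$ is the equilibrium.
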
 

\begin{proof}
Let $\mathbf{x}:=\col ( x,  \rho, \epsilon,  \lambda,  u,  \pi,  \nu,  \mu)$ be a solution of the closed-loop system,   $ \mathbf{\bar x}:=\col (\bar  x, \bar  \rho,\bar  \epsilon,  \bar \lambda, \bar u,\bar  \pi,\bar  \nu, \bar \mu)$ be an equilibrium of it and $\mathbf{\tilde x}:=\mathbf{x}- \mathbf{\bar x}$.  Note that as a result of  Proposition~\ref{pro:unique_equ},  $ \mathbf{\bar x}$ is unique,  $\bar x=x^*$, $\bar\lambda=\lambda^*$ and $\bar u=u^*$.
We consider a quadratic Lyapunov function as 
$V(\mathbf{\tilde x}):=\frac{1}{2} \norm{\mathbf{\tilde x}}^2$.
The time-derivative of the evolution of $V$ along the solutions of the closed-loop system is computed as 
\begin{align*}
    &\dot V=\mathbf{\tilde x}^\top  \mathbf{\dot x}=
    \\ &{\tilde x}^{\top} \dot x +{\tilde \rho}^{\top} \dot \rho + {\tilde \epsilon}^{\top} \dot \epsilon +{\tilde \lambda} \dot \lambda + {\tilde u}^{\top} \dot u +
    {\tilde \pi}^{\top} \dot \pi + {\tilde \nu} \dot \nu + {\tilde \mu} [-\nu]_{\mu}^+.
\end{align*}
By adding and subtracting $\tilde \mu(-\nu)$, we can rewrite  $\dot V$ as
\begin{multline*}
    \dot V={\tilde x}^{\top} \dot x + {\tilde \rho}^{\top} \dot \rho + {\tilde \epsilon}^{\top} \dot \epsilon + {\tilde \lambda} \dot \lambda + {\tilde u}^{\top} \dot u +
    {\tilde \pi}^{\top} \dot \pi + {\tilde \nu} \dot \nu\\ + \tilde \mu(-\nu) +{\tilde \mu} \big( [-\nu]_{\mu}^+ - (-v) \big).
\end{multline*}
Next, we show that the last term on the right-hand side of the equality above is nonpositive. Suppose that $\mu > 0$. Then, $[-\nu]_{\mu}^+=-\nu$ and thus $\tilde \mu \big([-\nu]_{\mu}^+-(-\nu) \big)=0$. Now, suppose that $\mu=0$. Then, we have $[-\nu]_{\mu}^+\ge -\nu$ and $\tilde \mu=\mu-\mu^* \le 0$. This proves $\tilde \mu \big([-\nu]_{\mu}^+-(-\nu) \big) \le 0$.
Consequently, we have
\begin{equation} \label{pd3:lyp1}
    \dot V \le {\tilde x}^{\top} \dot x + {\tilde \lambda} \dot \lambda + {\tilde u}^{\top} \dot u +
    {\tilde \pi}^{\top} \dot \pi + {\tilde \nu} \dot \nu + \tilde \mu(-\nu).
\end{equation}
In addition, as $\mathbf{\bar x}$ is an equilibrium of the system, the following equalities hold:
\begin{equation} \label{pd3:lyp2}
    \begin{aligned}
         &-\tilde x^\top(-Q \bar x-c-\bar \rho -\bar u)=0, \\
         &-\tilde \rho^\top( \bar x-\bar \epsilon -a)=0,\\
         &-\tilde \epsilon^\top ( \bar \rho- \vt{1} \bar \lambda)=0,\\
         &-\tilde \lambda (\vt{1}^{\top} \bar \epsilon)=0, \\
        &-\tilde u^\top (-Q^{-1} \bar u -Q \bar \pi-\bar x - Q^{-1}(c+\lambda^{\mathrm{max}}\vt{1}))=0,\\
         &-\tilde \pi^\top (Q \bar u - \vt{1} \bar \nu)=0, \\
        &-\tilde \nu (\vt{1}^{\top} \bar \pi + \bar \mu)=0.
    \end{aligned}
\end{equation}

Moreover, it holds that 
\begin{equation} \label{pd3:lyp2p}
    -\tilde \mu(-\bar \nu) \ge 0.
\end{equation}
To see this, using \eqref{equ:cl0_h}, we note that 
for $\ \bar \mu>0$, we have $0=[-\bar \nu]_{\bar \mu}^{+}=-\bar \nu$ and consequently $-{\tilde \mu}(-\bar \nu) = 0$.  Otherwise, if $\ \bar \mu=0$, then $0=[-\bar \nu]_{\bar \mu}^{+} \ge -\bar \nu$ and $\tilde \mu =(\mu-\bar \mu) \ge 0$. 

Now, by adding the expressions in \eqref{pd3:lyp2} and \eqref{pd3:lyp2p} to the right hand side of \eqref{pd3:lyp1}, we get
$\dot V \le 
    \mathbf{\tilde x}^{\top} X \mathbf{\tilde x} =\mathbf{\tilde x}^{\top} X_{\mathrm{sym}} \mathbf{\tilde x}$
with $X_{\mathrm{sym}}:=\frac{X+X^\top}{2}$  and 
\begin{equation*}
\small
    X:=
    \begin{bmatrix}
        -Q & -I & \vt{0} & \vt{0} &-I &\vt{0} &\vt{0} &\vt{0} \\
        I &\vt{0} &-I &\vt{0} &\vt{0} &\vt{0} &\vt{0} &\vt{0} \\
        \vt{0} & I &\vt{0} &-\vt{1} &\vt{0} &\vt{0} &\vt{0} &\vt{0} \\
        \vt{0} &\vt{0} &\vt{1}^{\top} &\vt{0} &\vt{0} &\vt{0} &\vt{0} &\vt{0} \\
        -I &\vt{0} &\vt{0} &\vt{0} & -Q^{-1} & -Q &\vt{0} &\vt{0} \\
        \vt{0} &\vt{0} &\vt{0} &\vt{0} & Q & \vt{0} & -\vt{1} & \vt{0} \\
        \vt{0} &\vt{0} &\vt{0} &\vt{0} &\vt{0} & \vt{1}^\top & 0 & 1 \\
        \vt{0} &\vt{0} &\vt{0} &\vt{0} &\vt{0} & \vt{0} & -1 &0
    \end{bmatrix}
    .
\end{equation*}
It is easy to see that $X_{\mathrm{sym}}$ can be written as ${X_{\mathrm{sym}}=-B^\top B}$ with
$
B:=
\begin{bmatrix}
    Q^{\frac{1}{2}} & \vt{0} &\vt{0} &\vt{0}  &Q^{-\frac{1}{2}}& \vt{0}& \vt{0}& \vt{0} 
\end{bmatrix}
$.
Hence, $X_{\mathrm{sym}}$ is negative semidefinite and  $\dot V \le 0$.
The set of points satisfying $\dot V=0$ is given by ${\mathcal{S}:=\{\mathbf{\tilde x} |\, B \mathbf{\tilde x}= 0\}}$.
Then, by applying a Caratheodory variant of LaSalle’s invariance principle (see \cite{CHERUKURI201610}), we conclude that $\mathcal{S}$ does not contain any trajectory of the system, except the point $\mathbf{\tilde x}=0$. Hence, $\mathbf{x}$ converges to the unique equilibrium of the closed-loop system $\mathbf{\bar x}$ with $\bar x=x^*$, $\bar\lambda=\lambda^*$, and $\bar u=u^*$.
\end{proof}

\noindent \textit{Insights on the derivation of the controller dynamics \eqref{dyn:cont}:} 
First, we change the variables of \eqref{opt:swv_primal_mod} from $(y, s)$ to $(x, \nu)$ using the map 
    $\left[ \begin{smallmatrix}  y \\  s  \end{smallmatrix} \right]=
    M \left[ \begin{smallmatrix} x \\  \nu  \end{smallmatrix} \right]$ ,
    with $M$ as in \eqref{eq:M}. This change of variables leads to the optimization problem 
    \begin{subequations} \label{opt:after_cov}
    \begin{alignat}{2} \nonumber
        &\min_{x,\nu} \quad && \frac{1}{2}x^{\top}Qx+(c_0 +Q^{-1}\vt{1}  \nu)^\top x+ \\ \nonumber
        &&&\frac{1}{2} (Q^{-1}\vt{1}  \nu)^\top Q^{-1}(Q^{-1}\vt{1}  \nu) +\\  \label{opt:after_cov_a}
        &&&   (c_0+\lambda^{\mathrm{max}}\vt{1})^{\top}Q^{-1}(Q^{-1}\vt{1}  \nu) \\ \label{opt:after_cov_b}
        & \textrm{s.t.} && \vt{1}^{\top}x=\vt{1}^{\top}a, \\ \label{opt:after_cov_c}
        &&& \nu \ge 0.
    \end{alignat}
\end{subequations}
Note that as $M$ is nonsingular, 
$\left[ \begin{smallmatrix} x^* \\  \nu^*  \end{smallmatrix} \right] 
    =
    M^{-1}
    \left[ \begin{smallmatrix} \bar y \\ \bar s  \end{smallmatrix} \right]
    $ 
is the optimal solution to \eqref{opt:after_cov} (see \cite[Section~4.1.3]{boyd2004convex}). Moreover, since the problem \eqref{opt:swv_primal_mod} is convex, this change of variables does not change the optimal value of the dual variables. Thus, based on Proposition~\ref{pro:map}, the optimal value of the dual variable corresponding to the constraint \eqref{opt:after_cov_b} is $\bar \lambda= \lambda^*$.
Next, motivated by the fact that $u^*=Q^{-1} \vt{1} \nu^*$ (see \eqref{kkt:mpec_quad1_middle}), we substitute $Q^{-1}\vt{1} \nu$ in the objective function  \eqref{opt:after_cov_a} with $u$ and add  $Qu= \vt{1} \nu$ as a constraint to the optimization problem. Furthermore, to facilitate decentralized implementation, we replace the constraint ${\vt{1}^{\top}x=\vt{1}^{\top}a}$ with $x-a-\epsilon=0$ and $\vt{1}^{\top}\epsilon=0$. This gives us  the following problem:
\begin{subequations} \label{opt:after_cov&sub}
    \begin{alignat}{2} \nonumber
        &\min_{x,u, \nu, \epsilon } \quad &&  \frac{1}{2}x^{\top}Qx+(c_0+u)^{\top}x+\\
        &&&\frac{1}{2}u^{\top}Q^{-1} u+(c_0+\lambda^{\mathrm{max}}\vt{1})^{\top}Q^{-1}u \\ \label{opt:after_cov&sub_b} 
        & \textrm{s.t.} && \vt{1}^{\top}\epsilon=0,  \\ \label{opt:after_cov&sub_c}
        &&&  x-a-\epsilon=0,\\ \label{opt:after_cov&sub_d}
        &&&Q u=  \vt{1} \nu,\\ \label{opt:after_cov&sub_e}
        &&&\nu \ge 0.
    \end{alignat}
\end{subequations}
Now, let  $\lambda$, $\rho$, $\pi$ and $\mu$ be the dual variables corresponding to the constraint \eqref{opt:after_cov&sub_b}-\eqref{opt:after_cov&sub_e}, respectively. Then, the primal-dual gradient dynamics of this optimization problem coincide with the closed-loop system \eqref{dyn:pd2} and \eqref{dyn:cont}. In particular, the equilibrium of the closed-loop system as specified in Proposition~\ref{pro:unique_equ} coincides with solutions of the KKT conditions of problem \eqref{opt:after_cov&sub}. 
\begin{remark}
Theorem~\ref{the:main} establishes asymptotic stability of primal-dual dynamics associated with the optimization problem \eqref{opt:after_cov&sub}. While numerous studies have addressed the asymptotic stability of the equilibrium point in primal-dual gradient dynamics, these investigations predominantly rely on the assumption of strict or strong convexity of the objective function. Such investigations do not readily apply to the \textit{nonstrict} convex optimization problem \eqref{opt:after_cov&sub}. 
Nonetheless, despite lack of strict convexity, \eqref{opt:after_cov&sub} has a unique primal-dual optimizer which enables us to show asymptotic convergence of the solutions of the closed-loop system \eqref{dyn:pd2} and \eqref{dyn:cont} to its equilibrium point. 
\end{remark}

\section{Case study} \label{sec:cs}
We numerically demonstrate the merit of the \ac{sce} considered here, along with the controller's performance in achieving this equilibrium.  We consider a population $\mathcal{N}=\{1,2,3,4\}$ of agents participating in the market.
The data concerning the agents' parameters and generations can be found in Table~\ref{table:data}. It should be noted that the parameter $q_i$ signifies an agent's urgency or desire for energy, while $c_i$ indicates the consumption level at which the agent derives the most utility, represented by $\frac{-c_i}{q_i}$.
In the particular case we study here, agents $1$ and $2$ have relatively high demand but with low urgency, whereas agents $3$ and $4$ show low demand but with high urgency, as reflected in the chosen parameters in the first and third columns of the table. 
\setlength{\tabcolsep}{8pt} 
\renewcommand{\arraystretch}{1} 
\begin{table}[h]
\caption{Agents' data}
\centering
\begin{tabular}{@{}ccccc@{}}
\toprule
{agent} & \textbf{$q_i(\text{€}/\mathrm{kWh}^2)$}    & \textbf{$c_i (\text{€}/\mathrm{kWh})$} & \textbf{$-\frac{c_i}{q_i}(\mathrm{kWh})$} & \textbf{$a_i (\mathrm{kWh})$}  \\ \midrule
\textbf{$1$} &  1    &        -50        &        50        &     48                                                             \\
\textbf{$2$} & 1.5    &      -60          &      40          &    30                                                            \\
\textbf{$3$} & 10    &        -40        &         4       &     1.5                                                            \\
\textbf{$4$} & 20    &       -20         &          1      &      0.5                                                           \\ \bottomrule
\end{tabular}
\label{table:data}
\end{table}

We first determine the \ac{ce} of the market by solving the optimization problem \eqref{opt:sw}. This returns the values of the CE as
\[
 \big( (\bar x_1, \bar x_2, \bar x_3, \bar x_4 ), \bar \lambda\big)= \big( (41.74 , 34.5, 3.17, 0.59), 8.26 \big),
\]
which means that the market price at the CE escalates to $8.26 \, \text{€}/\mathrm{kWh}$. Additionally, despite the pressing need for energy, the \ac{ce} demand for agents 3 and 4 falls below their respective demands for maximum satisfaction. This discrepancy is particularly striking for agent 4, who, despite having the highest urgency, is allocated a demand of merely $0.59 \, \mathrm{kWh}$. This amount represents a notably small fraction of what would be required for agent 4's maximum satisfaction.



Next, we analyze the outcomes of the \ac{sce}. We set the socially acceptable  price, denoted by $\lambda^\max$, to $4 \, \text{€}/\mathrm{kWh}$. Figure~\ref{fig:sce} displays the \ac{sce} results, derived from the dynamics \eqref{dyn:pd2} and \eqref{dyn:cont}. The equilibrium is asymptotic stability as expected, and the SCE of the market is obtained as 
\begin{gather*}
    \big((x_1^*,x_2^*,x_3^*,x_4^* ),\lambda^*, (u_1^*,u_2^*,u_3^*,u_4^* ) \big)=\\
    \big( (40.69,34.98,3.55,0.79 ),4, (5.31,3.54,0.53,0.26) \big).
\end{gather*}
Furthermore, in contrast to the \ac{ce}, the \ac{sce} effectively allows agents 3 and 4 to secure a higher demand due to their acute urgency. Additionally, agent 1, characterized by the lowest urgency, consumes less compared to the \ac{ce}, showing the \ac{sce} capability to more equitably redistribute demand among agents.  

\begin{figure}[h]
  \centering
  \begin{subfigure}[b]{.75\linewidth}
    \includegraphics[width=\linewidth]{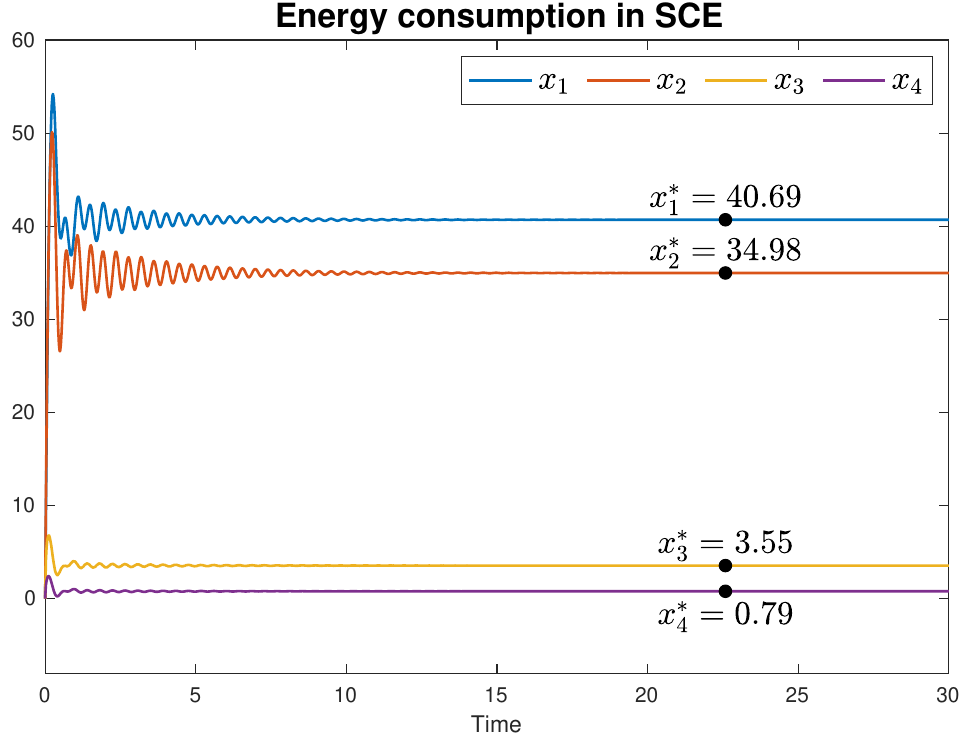}
      \end{subfigure}

  \begin{subfigure}[b]{.75\linewidth}
    \includegraphics[width=\linewidth]{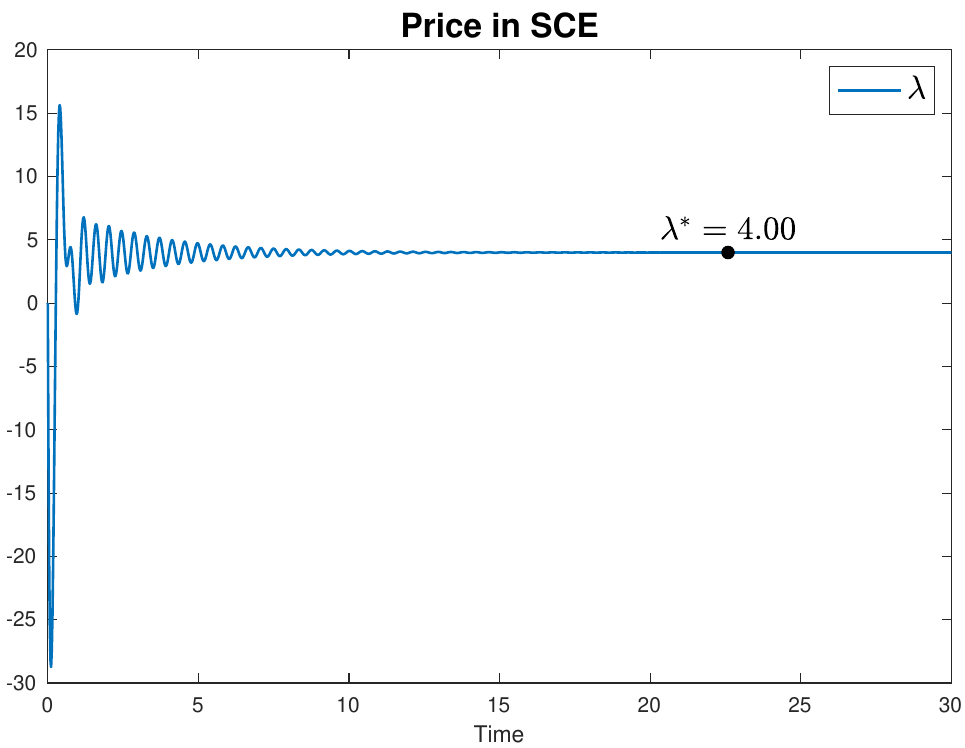}
  \end{subfigure}
  
  \begin{subfigure}[b]{.75\linewidth}
    \includegraphics[width=\linewidth]{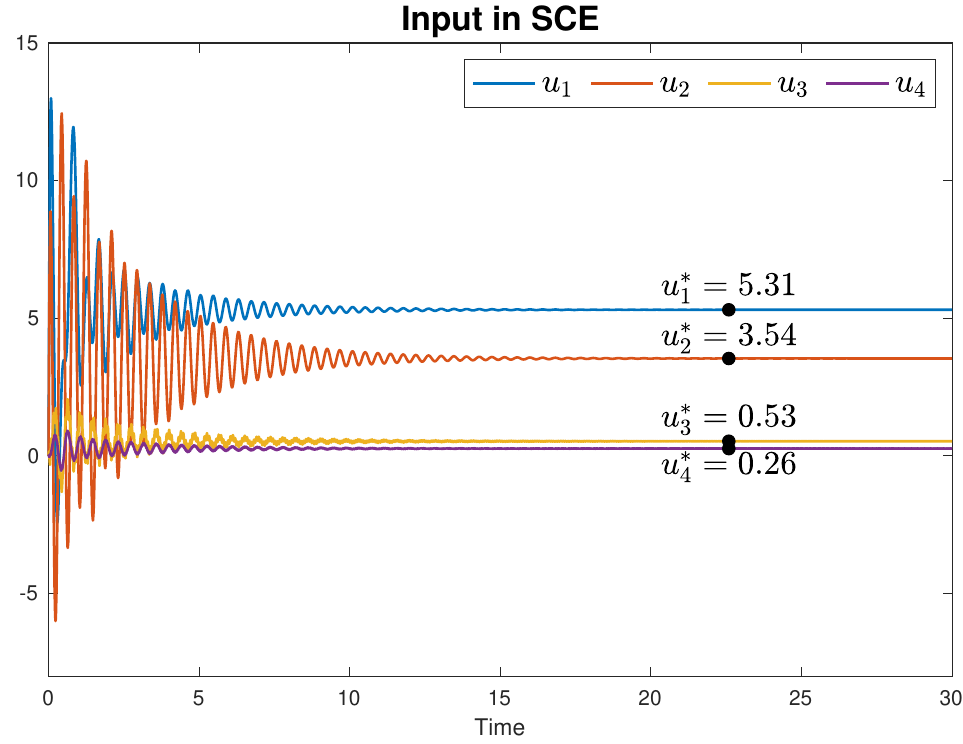}
  \end{subfigure}
  \caption{Trajectory of the closed-loop system \eqref{dyn:pd2} and \eqref{dyn:cont}, demonstrating convergence to the \ac{sce} with price $\lambda^{\max}=4$.}
  \label{fig:sce}
\end{figure}

\section{Conclusions} \label{sec:con}
In this work, we have studied the problem of energy sharing for a population of agents. While the \ac{ce} is known to be Pareto optimal, it might lead to high prices. By considering controllable utility functions for the agents, we have introduced a decentralized controller which enables the agents to converge asymptotically to the optimal \ac{sce} with a certain socially acceptable price. 

For future work, one avenue is to consider the inclusion of different market players, such as suppliers of non-renewable electricity. Additionally, as the physical grid constraints can change the consumption allocation of the agents, another direction involves taking these constraints into account.

\section*{APPENDIX: Proof of Lemmas and Propositions} \label{sec:app}
\begin{proof}[Proof of Lemma~\ref{lem:swsa_unique}]
Let $\pi_1^*$, $\pi_2^*$ and $\nu^*$  be the optimal  dual variables corresponding to the constraints \eqref{eq:sace_a}, \eqref{eq:sace_b} and \eqref{eq:sace_c}, respectively.
A primal-dual optimizer of \eqref{opt:swsa} satisfies the necessary and sufficient KKT conditions given by
\begin{subequations} \label{kkt:mpec_quad1_full}
    \begin{alignat}{1} \label{kkt:mpec_quad1_full_a}
       & u^*+ \pi_2^*=0, \\ \label{kkt:mpec_quad1_full_b}
        &Q \pi_2^*+\vt{1} \nu^*=0,\\ \label{kkt:mpec_quad1_full_c}
        & \pi_1^*+\vt{1}^{\top} \pi_2^*=0,\\ \label{kkt:mpec_quad1_full_d}
        &Q x^*+c_0+ u^*+\vt{1} \lambda^* =0,\\ \label{kkt:mpec_quad1_full_e}
        &\vt{1}^{\top} x^*-\vt{1}^{\top}a=0,\\ \label{kkt:mpec_quad1_full_f}
        &0 \le \pi_1^* \perp  \lambda^{\mathrm{max}} - \lambda^* \ge 0.
    \end{alignat}
\end{subequations}
Given \eqref{kkt:mpec_quad1_full_a}-\eqref{kkt:mpec_quad1_full_c}, we get 
\begin{equation} \label{kkt:mpec_quad1_middle}
    u^*=Q^{-1}\vt{1} \nu^*, \; \pi_1^*=\vt{1}^{\top} Q^{-1} \vt{1} \nu^*, \; \pi_2^*=-Q^{-1} \vt{1} \nu^*.
\end{equation}
Based on \eqref{kkt:mpec_quad1_middle}, we rewrite \eqref{kkt:mpec_quad1_full_d}-\eqref{kkt:mpec_quad1_full_f} as
\begin{subequations} \label{kkt:mpec_quad1}
    \begin{alignat}{1} \label{kkt:mpec_quad1_a}
        &Q x^*+c_0+Q^{-1}\vt{1}  \nu^*+\vt{1} \lambda^* =0,\\  \label{kkt:mpec_quad1_b}
        &\vt{1}^{\top} x^*-\vt{1}^{\top}a=0, \\  \label{kkt:mpec_quad1_c}
        &0 \le  \nu^* \perp \lambda^{\mathrm{max}}- \lambda^*  \ge 0.
    \end{alignat}
\end{subequations}
Moreover, based on \eqref{kkt:mpec_quad1_a} and \eqref{kkt:mpec_quad1_b}, we have
\begin{subequations} \label{eq:lambda_nu&x}
    \begin{alignat}{1}\label{eq:lambda_nu&x_a}
        &\nu^*=(\vt{1}^{\top}Q^{-2} \vt{1})^{-1} \vt{1}^{\top} (\phi(\lambda^*)-a), \\
        \label{eq:lambda_nu&x_b}
        &x^*=\phi(\lambda^*)-Q^{-2}\vt{1} \nu^*.
    \end{alignat}
\end{subequations}
Substituting $\nu^*$ from \eqref{eq:lambda_nu&x_a} in \eqref{kkt:mpec_quad1_c} leads to the scalar linear complementarity relation \eqref{kkt:swsa_reduced} with $\lambda=\lambda^*$. Finally, by applying the change of variable   $z:=\lambda^{\mathrm{max}}- \lambda$,  \eqref{kkt:swsa_reduced} can be written as a standard linear complementarity problem. Existence and uniqueness of the solution to this problem follow from \cite[Theorem~3.1.6]{cottle2009linear}. Thus, $\lambda^*$ is unique. The optimal values of the other primal and dual variables can be determined uniquely from \eqref{eq:lambda_nu&x} and \eqref{kkt:mpec_quad1_middle}.
\end{proof}

\begin{proof}[Proof of Lemma~\ref{lem:dual1}]
Note that $\bar \lambda$ is the optimizer of \eqref{opt:swv_dual} if and only if it satisfies 
\begin{equation} \label{KKT:swv_dual}
    \vt{1}^{\top}Q^{-1}\vt{1} \bar \lambda + (Q^{-1}c_0+a)^{\top}\vt{1}=0.
\end{equation}
Next, we write the necessary and sufficient KKT conditions of \eqref{opt:swv} as
\begin{subequations} \label{KKT:swv}
\begin{align} \label{KKT:swv_a}
    &Q \bar y +c_0 +\vt{1} \bar \lambda=0, \\ \label{KKT:swv_b}
    &\vt{1}^{\top} \bar y- \vt{1}^{\top}a=0.
\end{align}
\end{subequations}
Solving $\bar {y}$ from \eqref{KKT:swv_a} and substituting it in   \eqref{KKT:swv_b}, we obtain $\bar y= -Q^{-1} (\vt{1} \bar \lambda+c_0)$ and \eqref{KKT:swv_dual}. 
\end{proof}

\begin{proof}[Proof of Lemma~\ref{lem:dual2}]
    Let $\bar \mu_{\lambda}$ be the optimal dual variable corresponding to the inequality constraint in \eqref{opt:swv_dual_mod}. Then, $\bar \lambda$ is an optimizer of \eqref{opt:swv_dual_mod} if and only if it satisfies 
    \begin{subequations}
     \label{KKT:swv_dual_mod}
     \begin{align}\label{KKT:swv_dual_mod_a}
         &\vt{1}^{\top}Q^{-1}\vt{1} \bar \lambda + (Q^{-1}c_0+a)^{\top}\vt{1}+\bar \mu_{\lambda}=0, \\ \label{KKT:swv_dual_mod_b}
         &0 \le \bar \mu_{\lambda} \perp\lambda^\max- \bar  \lambda \ge 0.
     \end{align}
    \end{subequations}
    By substituting $\bar \mu_{\lambda}$ from \eqref{KKT:swv_dual_mod_a} in  \eqref{KKT:swv_dual_mod_b}, we get \eqref{kkt:swsa_reduced} with $\lambda=\bar \lambda$. Similarly, $\big((\bar y, \bar s), (\bar \lambda, \bar \mu_s)\big)$ is a primal-dual optimizer of \eqref{opt:swv_primal_mod} if and only if it satisfies
    \begin{subequations}
     \label{KKT:swv_primal_mod}
     \begin{align}\label{KKT:swv_primal_mod_a}
         &Q \bar y+c_0 + \vt{1} \bar \lambda=0, \\ \label{KKT:swv_primal_mod_b}
         & \lambda^\max - \bar \lambda- \bar \mu_s=0,\\ \label{KKT:swv_primal_mod_c}
         &\vt{1}^\top \bar y - \vt{1}^\top  a=\bar s, \\
         \label{KKT:swv_primal_mod_d}
         &0 \le \bar \mu_s \perp \bar s \ge 0.  
     \end{align}
    \end{subequations}
    The equations \eqref{KKT:swv_primal_mod_a} and \eqref{KKT:swv_primal_mod_b} lead to $\bar y=\phi(\bar y)$ and  $\bar \mu_s=\lambda^\max- \bar \lambda$. By substituting $\bar y=\phi(\bar \lambda)$ in \eqref{KKT:swv_primal_mod_c}, we get $\bar s=\vt{1}^{\top}(\phi(\bar \lambda)-a)$. 
     Consequently,  \eqref{KKT:swv_primal_mod} reduces to
     \eqref{kkt:swsa_reduced}. Finally, it follows from Lemma~\ref{lem:swsa_unique}(b) that $\bar \lambda$ and hence, $\big((\bar y, \bar s), (\bar \lambda, \bar \mu_s)\big)$ are unique.
\end{proof}
\begin{proof}[Proof of Proposition~\ref{pro:map}]
Nonsingularity of $M$ follows from the fact that $\det M= \vt{1}^{\top} Q^{-2} \vt{1}>0$. 
Next, it follows from Lemma~\ref{lem:swsa_unique}(a) and  Lemma~\ref{lem:dual2} that $\lambda^*$ and $\bar \lambda$ satisfy the same linear complementarity relation \eqref{kkt:swsa_reduced}. 
Bearing in mind that the solution to \eqref{kkt:swsa_reduced} is unique by Lemma~\ref{lem:swsa_unique}(b), we find that $\bar \lambda=\lambda^*$.
Finally, by \eqref{eq:lambda_nu&x} in the proof of Lemma~\ref{lem:swsa_unique} and $\bar y= \phi(\bar \lambda)$, $\bar s= \vt{1}^{\top}(\phi(\bar \lambda)-a)$ in Lemma~\ref{lem:dual2}, it is straightforward to verify that \eqref{eq:cov} holds.
\end{proof}
\begin{proof}[Proof of Proposition~\ref{pro:unique_equ}]
At any equilibrium point $(\bar x, \bar \rho, \bar \epsilon, \bar \lambda, \bar u, \bar \pi, \bar \nu, \bar \mu)$, we have
\begin{subequations} \label{equ:cl0}
    \begin{alignat}{1} \label{equ:cl0_a}
        0 &= -Q \bar x-c_0-\bar \rho-\bar u, \\  \label{equ:cl0_b}
        0 &= \bar x-a-\bar \epsilon,\\ \label{equ:cl0_c}
        0 &= \bar \rho-\vt{1} \bar \lambda,\\ \label{equ:cl0_d}
        0 &=\vt{1}^{\top} \bar \epsilon, \\ \label{equ:cl0_e}
        0 & =  -Q^{-1} \bar u -Q \bar \pi -\bar x - Q^{-1}(c_0+\lambda^{\mathrm{max}}\vt{1}),\\ \label{equ:cl0_f}
        0 & =  Q \bar  u - \vt{1} \bar \nu,, \\ \label{equ:cl0_g} 
        0 & = \vt{1}^{\top} \bar \pi + \bar \mu,\\ \label{equ:cl0_h} 
        0 &\le \bar \nu \perp \bar \mu \ge 0.
    \end{alignat}
\end{subequations}
Given \eqref{equ:cl0_b}, \eqref{equ:cl0_c} and \eqref{equ:cl0_f}, we can reduce this to
\begin{subequations} \label{equ:cl1}
    \begin{alignat}{1} \label{equ:cl1_a}
        0 &= -Q \bar x-c_0-\vt{1} \bar \lambda-Q^{-1} \vt{1} \bar \nu, \\ \label{equ:cl1_b}
        0 & = \vt{1}^{\top} \bar x- \vt{1}^{\top} a , \\ \label{equ:cl1_c}
        0 & =  -Q^{-2} \vt{1} \bar \nu -Q \bar \pi -\bar x - Q^{-1}(c_0+\lambda^{\mathrm{max}}\vt{1}),\\ \label{equ:cl1_d}
        0 & = \vt{1}^{\top} \bar \pi + \bar \mu,\\ \label{equ:cl1_e} 
        0 &\le \bar \nu \perp \bar \mu \ge 0.
    \end{alignat}
\end{subequations}
We left-multiply both sides of \eqref{equ:cl1_c} by $Q$ and subtract it from \eqref{equ:cl1_a}, which gives 
$\vt{1} (\bar \lambda -\lambda^{\max})=Q^{2} \bar \pi$.
By solving $ \bar \pi$ from the previous equation and substituting the solution in \eqref{equ:cl1_d}, we get $\bar \mu= (\vt{1}^{\top}Q^{-2}\vt{1}) (\lambda^{\max}-\bar \lambda)$. Consequently,  \eqref{equ:cl1} can be further reduced to
\begin{subequations} \label{equ:cl2}
    \begin{alignat}{1} \label{equ:cl2_a}
        0 &= -Q \bar x-c_0-\vt{1} \bar \lambda-Q^{-1} \vt{1} \bar \nu, \\ \label{equ:cl2_b}
        0 & = \vt{1}^{\top} \bar x- \vt{1}^{\top} a , \\ \label{equ:cl2_c}
        0 &\le \bar \nu \perp \lambda^{\max}-\bar \lambda \ge 0,
    \end{alignat}
\end{subequations}
which is equivalent to \eqref{kkt:mpec_quad1}. 
It follows from the proof of Lemma~\ref{lem:swsa_unique} that the point $(\bar x, \bar \lambda, \bar u)$ exists and is unique, namely it is equal to the SCE $(x^*, \lambda^*, u^*)$. Subsequently, given $(\bar x, \bar \lambda, \bar u)=(x^*, \lambda^*, u^*)$, 
the values of $ \bar \rho$, $\bar \epsilon$, $\bar \pi$, $\bar \nu$ and $\bar \mu$ can be uniquely determined from $\eqref{equ:cl0_c}$, $\eqref{equ:cl0_b}$, $\eqref{equ:cl0_e}$, $\eqref{equ:cl0_f}$, $\eqref{equ:cl0_g}$, respectively. 
\end{proof}

\section*{ACKNOWLEDGMENT}

The authors thank Prof. John Simpson-Porco for his thorough review and valuable feedback on the manuscript.


\addtolength{\textheight}{-12cm}   




\bibliographystyle{IEEEtran}
\bibliography{mybib.bib}

\end{document}